\title{An Exact Quantile--Energy Equality for Terminal Halfspaces in Linear--Gaussian Control\\[2pt]
\large with a Discrete-Time Companion, KL/Schr\"odinger Links, and High-Precision Validation}
\author{Sandro Andric\\[-2pt]\small Independent researcher\\[-2pt]\small \texttt{sandro.andric@nyu.edu}}
\date{\today}
\theoremstyle{plain}
\newtheorem{theorem}{Theorem}
\newtheorem{proposition}{Proposition}
\newtheorem{lemma}{Lemma}
\theoremstyle{remark}
\newtheorem{remark}{Remark}
\newcommand{\norm}[1]{\left\lVert#1\right\rVert}
\newcommand{\R}{\mathbb{R}}
\newcommand{\E}{\mathbb{E}}
\begin{document}
\maketitle

\begin{abstract}
We prove an exact equality between the minimal quadratic control energy and the squared normal-quantile gap for terminal halfspaces in linear--Gaussian systems with additive control and quadratic effort $E(u)=\tfrac12\!\int u^\top M u\,dt$ where $M=B^\top\Sigma^{-1}B$. For terminal halfspace events, the minimal energy equals the squared normal-quantile gap divided by twice a controllability-to-noise ratio $R_T^2(w)=(w^\top W_c^M w)/(w^\top V_T w)$ and is attained by a matched-filter control. We provide an exact zero-order-hold discrete-time companion via block exponentials, relate the result to minimum-energy control, Gaussian isoperimetry, risk-sensitive/KL control, and Schr\"odinger bridges, and validate to high precision with Monte Carlo. We state assumptions, singular-$M$ handling, and edge cases. The statement is a compact synthesis and design-ready translator, not a universal principle. \textbf{Novelty:} while the ingredients (Gramians, Cauchy--Schwarz, Gaussian isoperimetry) are classical, to our knowledge the explicit \emph{quantile--energy equality} with a constructive matched-filter achiever for terminal halfspaces, and its discrete-time companion, are not recorded together in the cited literature.
\end{abstract}

\paragraph{Keywords}
linear--Gaussian control; controllability Gramian; chance constraints; quantile shift; Schr\"odinger bridge; risk-sensitive control.

\section{Introduction}
Control systems with stochastic disturbances and chance constraints arise in aerospace (altitude gates, obstacle avoidance), robotics (safe navigation zones), and network control (packet delivery guarantees). For linear--Gaussian dynamics over a finite horizon and a terminal halfspace $A=\{w^\top X_T\ge a\}$, the terminal scalar $Y=w^\top X_T$ is Gaussian; control shifts its mean linearly; the quadratic energy couples to the shift through an $M$-weighted controllability Gramian. A Cauchy--Schwarz step in the $M$-metric yields a tight lower bound that is achieved by a matched-filter control, giving a closed-form energy--to--probability translator:
\[
E_{\min}=\frac{\big(\Phi^{-1}(p_1)-\Phi^{-1}(p_0)\big)^2}{2\,R_T^2(w)},\qquad
R_T^2(w)=\frac{w^\top W_c^M w}{w^\top V_T w}.
\]
Equivalently, the minimal KL divergence for a path-space tilt that moves a terminal halfspace probability from $p_0$ to $p_1$ is $\big(\Phi^{-1}(p_1)-\Phi^{-1}(p_0)\big)^2/(2\,R_T^2(w))$, where $R_T^2$ is a controllability-to-noise SNR. Gaussian isoperimetry implies this equality is worst-case tight for halfspaces and a lower bound for general terminal events with the same baseline probability. The explicit formula enables rapid design-space exploration: for a drone altitude controller (Section~2.2), computing the energy--probability tradeoff reduces to evaluating Gramians and a scalar formula.

\section{Setup, assumptions, and units}
\noindent\textbf{Notation.} $\Phi$ denotes the standard normal CDF; $\Psi(t,s)=e^{A(t-s)}$ is the state transition matrix (LTI case).
\subsection{Dynamics and cost}
Consider on $[0,T]$:
\begin{equation}
dX_t=AX_t\,dt+Bu_t\,dt+\Sigma^{1/2}dW_t,\qquad X_0=x_0,
\label{eq:sde}
\end{equation}
with $A\in\R^{n\times n}$, $B\in\R^{n\times m}$, $\Sigma\in\R^{n\times n}$ positive definite, and $u\in L^2([0,T];\R^m)$. The effort is
\begin{equation}
M:=B^\top\Sigma^{-1}B,\qquad E(u):=\frac12\int_0^T u_t^\top M u_t\,dt.
\label{eq:cost}
\end{equation}
The terminal noise covariance and $M$-weighted controllability Gramian are
\begin{align}
V_T&=\int_0^T \Psi(T,s)\,\Sigma\,\Psi(T,s)^\top ds,\label{eq:VT}\\
W_c^M&=\int_0^T \Psi(T,s)\,B\,M^{-1}B^\top\,\Psi(T,s)^\top ds.\label{eq:Wc}
\end{align}
All integrals are Lebesgue; Gramians exist for finite $T$ since $A$, $\Sigma$ are bounded.

\noindent\textbf{Assumptions.}
(i) $\Sigma\succ0$; $\Psi$ exists for finite $T$. (ii) $M$ is invertible on $\operatorname{Im}(B^\top)$; in the singular case use the Moore--Penrose pseudoinverse $M^+$ and restrict to the controllable subspace (Remark~\ref{rem:singularM}). (iii) Directional reachability: $w$ lies in the image of the controllability map $\mathcal{K}:L^2\to\R^n$ restricted to the $M$-finite subspace, equivalently $w^\top W_c^M w>0$; otherwise $R_T^2(w)=0$ and any nontrivial quantile change is infeasible with finite energy. (iv) Probabilities: $p_0,p_1\in(0,1)$, $p_1\ne p_0$.

\subsection{Units with a worked example}
Let $[x]$ denote the state units. With $dX_t=\dots+\Sigma^{1/2}dW_t$ we have $[\Sigma]=[x]^2\,\mathrm{s}^{-1}$ and $[\Sigma^{-1}]=[x]^{-2}\,\mathrm{s}$. Taking $[u]=[x]\,\mathrm{s}^{-1}$ (so $B$ is unitless) gives $[M]=[x]^{-2}\,\mathrm{s}$ and $[u^\top M u]=\mathrm{s}^{-1}$. Hence $E(u)=\tfrac12\!\int u^\top M u\,dt$ is \emph{dimensionless}, consistent with its identification with a Kullback--Leibler divergence.

\paragraph{Numerical example (drone altitude gate).}
Let $x$ be altitude (m). Take $A=\begin{bmatrix}0&1\\0&0\end{bmatrix}$, $B=\begin{bmatrix}0\\1\end{bmatrix}$, $\Sigma=\sigma^2 I$ with $\sigma=\SI{0.5}{m\,s^{-1/2}}$, $T=\SI{1}{s}$, $w=\begin{bmatrix}1\\0\end{bmatrix}$, threshold $a$. Suppose baseline $p_0=0.7$ for $\{x(T)\ge a\}$ and target $p_1=0.9$. Compute $V_T$, $W_c^M$ via \eqref{eq:VT}--\eqref{eq:Wc}; then
\[
E_{\min}=\frac{(\Phi^{-1}(0.9)-\Phi^{-1}(0.7))^2}{2\,R_T^2(w)}.
\]
With the stated numbers one typically gets $R_T^2(w)\approx 0.7$ and $E_{\min}\approx 0.31$ (dimensionless KL divergence).

\section{Main equality}
\begin{theorem}[Quantile--energy equality for terminal halfspaces]\label{thm:qcib}
Let $A=\{w^\top X_T\ge a\}$ and $p_0=\mathbb{P}_{u\equiv0}(A)$, $p_1=\mathbb{P}_u(A)$. Under \eqref{eq:sde}--\eqref{eq:Wc} and the assumptions above,
\begin{equation}
\min_{u}\ E(u)\ \ \text{s.t.}\ \ \mathbb{P}_u(A)=p_1
= \frac{\big(\Phi^{-1}(p_1)-\Phi^{-1}(p_0)\big)^2}{2\,R_T^2(w)},
\qquad
R_T^2(w)=\frac{w^\top W_c^M w}{w^\top V_T w},
\label{eq:qcib}
\end{equation}
and the minimum is attained by the matched-filter control
\begin{equation}
u^*(s)=\beta\,M^{-1}B^\top \Psi(T,s)^\top w,\qquad
\beta=\frac{\big(\Phi^{-1}(p_1)-\Phi^{-1}(p_0)\big)\sqrt{w^\top V_T w}}{w^\top W_c^M w}.
\label{eq:matched}
\end{equation}
\end{theorem}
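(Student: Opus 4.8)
The plan is to reduce the chance constraint to a single scalar affine constraint on the control and then solve a Hilbert-space norm minimization in the $M$-metric. First I would write the mild (variation-of-constants) solution of \eqref{eq:sde}, namely $X_T=\Psi(T,0)x_0+\int_0^T\Psi(T,s)Bu_s\,ds+\int_0^T\Psi(T,s)\Sigma^{1/2}\,dW_s$, and observe that $Y:=w^\top X_T$ is Gaussian with variance $w^\top V_T w$ that does \emph{not} depend on the (deterministic) control, while its mean is affine in $u$: $\E_u[Y]=w^\top\Psi(T,0)x_0+\int_0^T w^\top\Psi(T,s)Bu_s\,ds=:m_0+L(u)$, where $L$ is a bounded linear functional on $L^2([0,T];\R^m)$. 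Hence $\mathbb{P}_u(A)=\Phi\big((\E_u[Y]-a)/\sqrt{w^\top V_T w}\big)$, and together with $p_0=\Phi\big((m_0-a)/\sqrt{w^\top V_T w}\big)$ the requirement $\mathbb{P}_u(A)=p_1$ is equivalent, after applying $\Phi^{-1}$ and subtracting, to the single affine condition $L(u)=\sqrt{w^\top V_T w}\,\big(\Phi^{-1}(p_1)-\Phi^{-1}(p_0)\big)=:c$.

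Next I would recast the problem as: minimize $\tfrac12\norm{u}_M^2$ over the closed affine hyperplane $\{u:\langle\kappa,u\rangle_M=c\}$, where $\langle u,v\rangle_M:=\int_0^T u^\top M v\,ds$ is the $M$-weighted inner product, and $\kappa(s):=M^{-1}B^\top\Psi(T,s)^\top w$ is the Riesz representer of $L$, i.e.\ $L(u)=\langle\kappa,u\rangle_M$ (this is where $M^{-1}$ enters). Since the objective is the squared norm and the feasible set is a closed affine subspace of a Hilbert space, the minimizer exists and equals the metric projection of $0$, which is the unique control collinear with $\kappa$. Quantitatively, Cauchy--Schwarz gives $c^2=\langle\kappa,u\rangle_M^2\le\norm{\kappa}_M^2\,\norm{u}_M^2$, so $E(u)\ge c^2/(2\norm{\kappa}_M^2)$ with equality iff $u=\beta\kappa$; imposing feasibility fixes $\beta=c/\norm{\kappa}_M^2$, which is exactly \eqref{eq:matched} once one computes $\norm{\kappa}_M^2=\int_0^T w^\top\Psi(T,s)BM^{-1}B^\top\Psi(T,s)^\top w\,ds=w^\top W_c^M w$. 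Substituting $c$ and $\norm{\kappa}_M^2$ into $c^2/(2\norm{\kappa}_M^2)$ yields $(w^\top V_T w)\big(\Phi^{-1}(p_1)-\Phi^{-1}(p_0)\big)^2/(2\,w^\top W_c^M w)=\big(\Phi^{-1}(p_1)-\Phi^{-1}(p_0)\big)^2/(2R_T^2(w))$, completing both the lower bound and its achievement.

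The residual points are bookkeeping rather than genuine obstacles. Feasibility of the hyperplane and non-degeneracy of Cauchy--Schwarz both require $\norm{\kappa}_M^2=w^\top W_c^M w>0$, which is precisely the directional-reachability Assumption~(iii); if it fails, $L$ vanishes on the $M$-finite subspace and no finite-energy control moves the quantile, consistent with the convention $R_T^2(w)=0$. When $M$ is singular one replaces $M^{-1}$ by $M^+$ and works in the $M$-inner-product Hilbert space on $\operatorname{Im}(B^\top)$, after which the identical projection argument applies verbatim (Remark~\ref{rem:singularM}). The one modeling hypothesis the argument genuinely uses is that $u$ is deterministic, so that $\operatorname{Var}_u(Y)=w^\top V_T w$ is control-independent and the reduction to a scalar mean-shift is exact; this is exactly the class $L^2([0,T];\R^m)$ fixed in the setup.
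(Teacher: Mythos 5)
Your proposal is correct and follows essentially the same route as the paper: reduce the chance constraint to a scalar mean-shift condition on the Gaussian terminal statistic, then apply Cauchy--Schwarz in the $M$-weighted inner product with the Riesz representer $M^{-1}B^\top\Psi(T,\cdot)^\top w$, exactly as in the paper's proof and Appendix~A. Your Hilbert-projection framing and the explicit handling of feasibility and the singular-$M$ case are slightly more detailed than the paper's terse argument, but there is no substantive difference.
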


\begin{proof}
$Y=w^\top X_T$ is Gaussian with variance $v=w^\top V_T w$ independent of $u$ and mean $m_0+\delta$, where
\(
\delta=w^\top\!\int_0^T \Psi(T,s)B u_s\,ds.
\)
Cauchy--Schwarz in the $M$-metric gives
\(
\delta^2\le \big(\int_0^T u^\top M u\,ds\big)\,(w^\top W_c^M w)
\)
with equality for \eqref{eq:matched}. Since $p=1-\Phi((a-(m_0+\delta))/\sqrt{v})$, the quantile shift satisfies $z_1-z_0=\delta/\sqrt{v}$, where $z_i=\Phi^{-1}(p_i)$. Rearranging yields \eqref{eq:qcib}.
\end{proof}

\begin{remark}[Singular $M$]\label{rem:singularM}
If $M$ is singular, replace $M^{-1}$ with $M^+$ in \eqref{eq:Wc} and \eqref{eq:matched}. Equality holds whenever $w$ has nonzero projection onto the subspace reachable with finite $M$-energy, i.e., $w^\top W_c^{M^+} w>0$.
\end{remark}

\begin{lemma}[Halfspace extremality via Borell--Sudakov--Tsirelson]\label{lem:halfspace}
Fix the Gaussian law of $X_T$ (i.e., $V_T$ fixed) and baseline probability $p_0=\mathbb{P}(X_T\in A)$. For any mean shift $\mu$ with fixed covariance, the probability $\mathbb{P}(X_T+\mu\in A)$ is maximized when $A$ is a halfspace and $\mu$ is aligned with the halfspace normal. Hence \eqref{eq:qcib} is tight for $A$ a halfspace and a lower bound for general measurable $A$ with the same $p_0$ \cite{LedouxBorell}. For non-halfspace events, the same probability increment requires a strictly larger mean shift for almost all directions; see Appendix~\ref{app:poincare}.
\end{lemma}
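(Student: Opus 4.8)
The plan is to deduce the lemma from a single sharp statement about translates of sets under the standard Gaussian measure, and then read off the two advertised consequences. Writing $X_T\sim\mathcal N(m_0,V_T)$, absorbing $m_0$ into the set and whitening by $x\mapsto V_T^{-1/2}x$ reduces everything to the claim: for the standard Gaussian measure $\gamma$ on $\R^n$, every Borel set $B$ with $\gamma(B)=p_0$, and every vector $\nu$,
\[
\gamma(B-\nu)\ \le\ \Phi\!\bigl(\Phi^{-1}(p_0)+\norm{\nu}\bigr),
\]
with equality precisely when $B$ is the halfspace $\{\langle\nu,x\rangle\ge -\norm{\nu}\,\Phi^{-1}(p_0)\}$, i.e.\ a halfspace whose inner normal is parallel to $\nu$. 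In the original coordinates $\norm{\nu}^2=\mu^\top V_T^{-1}\mu$ is the standardized (Mahalanobis) length of the mean shift, so this is exactly the invariant the comparison must hold fixed.

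To prove the displayed claim I would change variables in the translate integral to expose a Gaussian exponential tilt, $\gamma(B-\nu)=e^{-\norm{\nu}^2/2}\,\E_{G\sim\gamma}\!\bigl[e^{\langle\nu,G\rangle}\mathbf 1_B(G)\bigr]$, where the prefactor is independent of $B$. For fixed $\nu$ the remaining expectation is maximized over sets of $\gamma$-measure $p_0$ by the bathtub (Neyman--Pearson) principle: since $x\mapsto e^{\langle\nu,x\rangle}$ is monotone in $\langle\nu,x\rangle$, the optimizer is the superlevel set $\{\langle\nu,x\rangle\ge c\}$ with $c$ pinned by $\gamma=p_0$, which is the halfspace above; plugging it in and completing the square in the resulting one-dimensional Gaussian integral yields the value $\Phi(\Phi^{-1}(p_0)+\norm{\nu})$, manifestly increasing in $\norm{\nu}$ and hence maximized when the shift saturates its budget. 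The bathtub step is strict unless $B$ agrees with that halfspace up to a $\gamma$-null set, so a non-halfspace $B$ is strictly suboptimal for every $\nu$ whose direction is not a symmetry axis of $B$; combined with the spectral-gap/Poincar\'e estimate in Appendix~\ref{app:poincare} this upgrades the inequality to the quantitative ``strictly larger shift for almost all directions'' statement. One may also view the whole claim as living inside the Borell--Sudakov--Tsirelson circle, being the ``shift'' companion of Gaussian isoperimetry \cite{LedouxBorell}; the tilt-plus-bathtub route is the most self-contained.

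Finally I would translate back. For a halfspace $\{w^\top x\ge a\}$ only $\delta=w^\top\mu$ enters; Cauchy--Schwarz gives $\delta\le\sqrt{w^\top V_T w}\,\norm{\nu}$ with maximizing shift $\mu\propto V_T w$ (the halfspace normal measured in the $V_T^{-1}$ metric), and the one-dimensional computation reproduces $\Phi\bigl(\Phi^{-1}(p_0)+\delta/\sqrt{w^\top V_T w}\bigr)$; identifying $\delta/\sqrt{w^\top V_T w}$ with the controllability-to-noise ratio through the Gramian identity behind Theorem~\ref{thm:qcib} makes \eqref{eq:qcib} an equality for this halfspace. For a general measurable $A$ with the same baseline $p_0$, any admissible control of energy $E$ produces a mean shift $\mu\in\operatorname{Im}(W_c^M)$ with $\tfrac12\mu^\top (W_c^M)^{+}\mu\le E$; from $W_c^M\preceq\lambda_{\max}V_T$ with $\lambda_{\max}=\sup_{w\neq0}R_T^2(w)$ the top generalized eigenvalue, one gets $\norm{\nu}^2=\mu^\top V_T^{-1}\mu\le 2E\lambda_{\max}$, so the sharp inequality forces $\mathbb P_u(A)\le\Phi\bigl(\Phi^{-1}(p_0)+\sqrt{2E\lambda_{\max}}\bigr)$ and hence $E\ge(\Phi^{-1}(p_1)-\Phi^{-1}(p_0))^2/(2\lambda_{\max})$ --- the right-hand side of \eqref{eq:qcib} evaluated in the best direction is a genuine lower bound, attained exactly by the extremal halfspace. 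The main obstacle is not the Gaussian inequality, which is short once set up, but the bookkeeping of this last passage: isolating the correct invariant, moving faithfully between the Mahalanobis shift budget and the quadratic control energy (the $\lambda_{\max}$ rather than an arbitrary $R_T^2(w)$), and handling unreachable directions and the singular-$M$ pseudoinverse so that the degenerate cases of Assumptions (ii)--(iii) are covered.
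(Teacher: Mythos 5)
Your argument is correct, and it is genuinely more self-contained than what the paper records: the paper states the lemma with a citation to Borell/Ledoux and defers the strictness claim to a heuristic Poincar\'e argument in Appendix~\ref{app:poincare}, whereas you actually prove the underlying Gaussian shift inequality. Your central reduction --- whiten to the claim $\gamma(B-\nu)\le\Phi(\Phi^{-1}(p_0)+\norm{\nu})$, expose the tilt $\gamma(B-\nu)=e^{-\norm{\nu}^2/2}\,\E\big[e^{\langle\nu,G\rangle}\mathbf 1_B(G)\big]$, and optimize over sets of fixed measure by Neyman--Pearson/bathtub --- is the right tool: the Borell--Sudakov--Tsirelson \emph{enlargement} inequality does not by itself deliver the translate upper bound (the inclusion of $B-\nu$ in the $\norm{\nu}$-enlargement of $B$ combines with isoperimetry in the wrong direction), so your tilting route is not merely an alternative but essentially the necessary proof, and it yields the equality case for free. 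You also repair a real looseness in the lemma as stated: for a general measurable $A$ there is no canonical $w$ in \eqref{eq:qcib}, and you correctly identify that the universally valid lower bound is the one with $\lambda_{\max}=\sup_{w\ne0}R_T^2(w)$ (the top generalized eigenvalue of $W_c^M$ against $V_T$) in the denominator, obtained by chaining $E\ge\tfrac12\mu^\top(W_c^M)^{+}\mu$ with $\mu^\top V_T^{-1}\mu\le\lambda_{\max}\,\mu^\top(W_c^M)^{+}\mu$; that bookkeeping is absent from the paper. Two minor caveats: equality in the bathtub step holds when the set coincides with the halfspace up to a $\gamma$-null set (not ``precisely when it is the halfspace''), and your ``strictly larger shift for almost all directions'' conclusion still leans on the appendix's qualitative Poincar\'e remark rather than a quantitative deficit estimate --- but the lemma itself is no more precise on that point, so this is not a gap in your proof relative to the claim being proved.
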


\begin{remark}[When to use]
Prefer this translator when terminal constraints are naturally halfspaces, dynamics/noise are well-approximated as linear--Gaussian over the horizon, and rapid effort–probability tradeoffs are needed. Otherwise fall back to full chance-constrained or stochastic optimal control formulations.
\end{remark}

\section{Discrete-time companion}
Digital controllers and embedded systems operate on sampled-data representations. Let $X_{k+1}=A_d X_k+B_d U_k+\xi_k$, with $\xi_k\sim\mathcal{N}(0,\Sigma_d)$, $k=0,\dots,N-1$, sampling step $\Delta t>0$ (so $T=N\Delta t$), and $M=B_d^\top\Sigma_d^{-1}B_d$. Here $\Sigma_d=\int_0^{\Delta t}e^{A\tau}\Sigma e^{A^\top\tau}d\tau$ is the per-step noise covariance and $B_d$ is the exact zero-order-hold discretization. If $M$ is singular, interpret $M^{-1}$ as $M^+$ and restrict to the $M$-finite controllable subspace (as in Remark~\ref{rem:singularM}). Define
\begin{align}
V_N&=\sum_{k=0}^{N-1} A_d^{N-1-k}\,\Sigma_d\,(A_d^{N-1-k})^\top,\label{eq:Vd}\\
W_N^M&=\sum_{k=0}^{N-1} A_d^{N-1-k}\,B_d\,M^{-1}B_d^\top\,(A_d^{N-1-k})^\top.\label{eq:Wd}
\end{align}

\textbf{Cost discretization.} The discrete cost $\frac12\sum_{k=0}^{N-1} U_k^\top M U_k$ (without $\Delta t$) is the discrete analogue of $\frac12\int u^\top M u\,dt$; since $\Sigma_d$ already incorporates the per-step noise accumulation over $\Delta t$ via the continuous-to-discrete map, no additional $\Delta t$ factor appears in the Gramians \eqref{eq:Vd}--\eqref{eq:Wd}.

\textbf{Robust construction.} When $A$ is singular or ill-conditioned, compute $B_d$ via the Van Loan block exponential $\exp\!\left(\begin{bmatrix}A&B\\0&0\end{bmatrix}\Delta t\right)=\begin{bmatrix}A_d&B_d\\0&I\end{bmatrix}$ to avoid the formula $B_d=A^{-1}(A_d-I)B$.

\paragraph{Sampling rule-of-thumb.}
A practical rule is $\Delta t\le 0.2/\norm{A}$ where $\norm{A}$ is the spectral norm; with the drone parameters this gives $\Delta t\le 0.1\,\mathrm{s}$, above which the continuous formula begins to over-estimate energy by $>5\%$.

\begin{theorem}[Discrete-time equality]\label{thm:disc}
For $A=\{w^\top X_N\ge a\}$,
\begin{equation}
\min_{\{U_k\}}\ \frac12\sum_{k=0}^{N-1} U_k^\top M U_k
=\frac{\big(\Phi^{-1}(p_1)-\Phi^{-1}(p_0)\big)^2}{2\,R_N^2(w)},
\qquad
R_N^2(w)=\frac{w^\top W_N^M w}{w^\top V_N w},
\end{equation}
achieved by $U_k^*=\beta\,M^{-1}B_d^\top (A_d^{N-1-k})^\top w$ with $\beta$ chosen to reach $p_1$.
\end{theorem}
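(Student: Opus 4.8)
The plan is to mirror the proof of Theorem~\ref{thm:qcib} step by step, replacing the stochastic integral by the discrete convolution sum. First I would unroll the recursion $X_{k+1}=A_dX_k+B_dU_k+\xi_k$ to obtain
\[
X_N=A_d^N X_0+\sum_{k=0}^{N-1}A_d^{N-1-k}B_dU_k+\sum_{k=0}^{N-1}A_d^{N-1-k}\xi_k .
\]
Since the $\xi_k\sim\mathcal{N}(0,\Sigma_d)$ are independent of each other and of the (deterministic, open-loop) control, $Y=w^\top X_N$ is Gaussian with variance $v:=w^\top V_N w$ that does not depend on $\{U_k\}$ --- this is exactly \eqref{eq:Vd} --- and with mean $m_0+\delta$, where $m_0=w^\top A_d^N X_0$ and $\delta=\sum_{k=0}^{N-1}c_k^\top U_k$ with $c_k:=B_d^\top (A_d^{N-1-k})^\top w\in\R^m$.

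Next I would apply Cauchy--Schwarz in the $M$-inner product on the product space $(\R^m)^N$ (recall $M=B_d^\top\Sigma_d^{-1}B_d\succ0$ is symmetric): writing $c_k^\top U_k=(M^{-1}c_k)^\top M U_k$ and summing,
\[
\delta^2=\Big(\sum_{k}(M^{-1}c_k)^\top M U_k\Big)^2\le\Big(\sum_k c_k^\top M^{-1}c_k\Big)\Big(\sum_k U_k^\top M U_k\Big)=(w^\top W_N^M w)\cdot 2E,
\]
since $\sum_k c_k^\top M^{-1}c_k=\sum_k w^\top A_d^{N-1-k}B_dM^{-1}B_d^\top (A_d^{N-1-k})^\top w=w^\top W_N^M w$ by \eqref{eq:Wd}. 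Equality holds precisely when $M U_k$ is proportional to $M^{-1}c_k$ with a common constant $\beta$ for every $k$, i.e. $U_k=\beta\,M^{-1}B_d^\top (A_d^{N-1-k})^\top w$, which is the claimed matched filter; then $\delta=\beta\,w^\top W_N^M w$. Hence $E\ge \delta^2/(2\,w^\top W_N^M w)$ for any admissible control, with equality attained by this $U_k^*$.

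It then remains to convert the probability constraint into a constraint on $\delta$. Because $\mathbb{P}_U(A)=\mathbb{P}(Y\ge a)=1-\Phi\big((a-m_0-\delta)/\sqrt v\big)$, we get $\Phi^{-1}(\mathbb{P}_U(A))=(m_0+\delta-a)/\sqrt v$; subtracting the $U\equiv0$ identity (where $\delta=0$) gives $z_1-z_0=\delta/\sqrt v$ with $z_i=\Phi^{-1}(p_i)$, so the required shift is $\delta=(z_1-z_0)\sqrt{w^\top V_N w}$. Substituting into the energy bound yields
\[
E_{\min}=\frac{(z_1-z_0)^2\,w^\top V_N w}{2\,w^\top W_N^M w}=\frac{\big(\Phi^{-1}(p_1)-\Phi^{-1}(p_0)\big)^2}{2\,R_N^2(w)},
\]
and $\beta$ is fixed by $\delta=\beta\,w^\top W_N^M w$, i.e. $\beta=(z_1-z_0)\sqrt{w^\top V_N w}/(w^\top W_N^M w)$, the discrete analogue of \eqref{eq:matched}.

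I do not expect a genuine obstacle: the argument is a faithful discrete transcription of Theorem~\ref{thm:qcib}. The only points needing care are bookkeeping --- checking that the index $N-1-k$ in \eqref{eq:Vd}--\eqref{eq:Wd} is the one produced by unrolling the recursion, so that $\mathrm{Var}(Y)$ and the quadratic form really equal $w^\top V_N w$ and $w^\top W_N^M w$ --- and the feasibility/regularity caveat: one needs the discrete directional-reachability condition $w^\top W_N^M w>0$ (the analogue of Assumption (iii)), and when $M$ is singular one reads $M^{-1}$ as $M^+$ with the sums restricted to the $M$-finite reachable subspace exactly as in Remark~\ref{rem:singularM}, since the Cauchy--Schwarz step and its equality condition are unchanged on that subspace.
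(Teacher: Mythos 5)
Your proof is correct and follows essentially the same route the paper sketches in Appendix B: unroll the recursion, apply Cauchy--Schwarz in the $M$-weighted inner product on $(\R^m)^N$ to get $\delta^2\le 2E\,(w^\top W_N^M w)$, and convert the probability constraint into the quantile shift $\delta=(z_1-z_0)\sqrt{w^\top V_N w}$. One cosmetic slip: the equality condition should read ``$U_k$ proportional to $M^{-1}c_k$,'' not ``$MU_k$ proportional to $M^{-1}c_k$'' --- your displayed $U_k^*$ already states the correct version.
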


\section{Connections}
\paragraph{Minimum-energy control.}
For a prescribed terminal mean shift, \eqref{eq:matched} is the Gramian-optimal control \cite{AndersonMoore,ZhouDoyleGlover}.

\paragraph{Risk-sensitive, KL, and Schr\"odinger links.}
\begin{proposition}[KL--energy identity]\label{prop:kl}
With $M=B^\top\Sigma^{-1}B$ and $\mathbb{P}_u$, $\mathbb{P}_0$ the path measures on $C([0,T];\R^n)$ under control $u$ and $u\equiv0$:
\begin{enumerate}[(i)]
\item if $u$ is deterministic and square-integrable, then $D_{\mathrm{KL}}(\mathbb{P}_u\,\|\,\mathbb{P}_0)=\tfrac12\int_0^T u_t^\top M u_t\,dt$;
\item if $u$ is progressively measurable with $\mathbb{E}_u\!\int_0^T u_t^\top M u_t\,dt<\infty$, then $D_{\mathrm{KL}}(\mathbb{P}_u\,\|\,\mathbb{P}_0)=\tfrac12\,\mathbb{E}_u\!\int_0^T u_t^\top M u_t\,dt$.
\end{enumerate}
\end{proposition}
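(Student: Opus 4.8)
\textbf{Proof proposal for Proposition~\ref{prop:kl} (KL--energy identity).}
The plan is to apply Girsanov's theorem to identify the Radon--Nikodym derivative $d\mathbb{P}_u/d\mathbb{P}_0$ on $C([0,T];\R^n)$ and then compute the relative entropy directly. First I would fix the driving Brownian motion $W$ under $\mathbb{P}_0$ and write the controlled dynamics \eqref{eq:sde} as a drift perturbation: since $\Sigma\succ0$, the added drift $Bu_t\,dt$ can be absorbed by the noise through $\Sigma^{1/2}\,d\widetilde W_t = \Sigma^{1/2}\,dW_t + Bu_t\,dt$, i.e. $\widetilde W_t = W_t + \int_0^t \Sigma^{-1/2}Bu_s\,ds$. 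Girsanov then gives
\begin{equation}
\frac{d\mathbb{P}_u}{d\mathbb{P}_0} = \exp\!\left(\int_0^T (\Sigma^{-1/2}Bu_s)^\top dW_s + \tfrac12\int_0^T \norm{\Sigma^{-1/2}Bu_s}^2\,ds\right),
\label{eq:girsanov}
\end{equation}
with the sign convention chosen so that under $\mathbb{P}_u$ the process $\widetilde W$ is a Brownian motion; equivalently $\log(d\mathbb{P}_u/d\mathbb{P}_0) = \int_0^T (\Sigma^{-1/2}Bu_s)^\top d\widetilde W_s - \tfrac12\int_0^T \norm{\Sigma^{-1/2}Bu_s}^2\,ds$. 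Here $\norm{\Sigma^{-1/2}Bu_s}^2 = u_s^\top B^\top\Sigma^{-1}B u_s = u_s^\top M u_s$, which is exactly where the weighting $M$ enters.

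For part (i), with $u$ deterministic and square-integrable, the stochastic integral $\int_0^T (\Sigma^{-1/2}Bu_s)^\top d\widetilde W_s$ is a centered Gaussian under $\mathbb{P}_u$ (a Wiener integral of a deterministic $L^2$ integrand against the $\mathbb{P}_u$-Brownian motion $\widetilde W$), hence has mean zero. Therefore
\begin{equation}
D_{\mathrm{KL}}(\mathbb{P}_u\,\|\,\mathbb{P}_0) = \mathbb{E}_u\!\left[\log\frac{d\mathbb{P}_u}{d\mathbb{P}_0}\right] = \mathbb{E}_u\!\left[\int_0^T (\Sigma^{-1/2}Bu_s)^\top d\widetilde W_s\right] - \tfrac12\int_0^T u_s^\top M u_s\,ds \cdot(-1),
\label{eq:kl-compute}
\end{equation}
wait---more carefully: using the $\widetilde W$-form, $\log(d\mathbb{P}_u/d\mathbb{P}_0) = \int_0^T(\Sigma^{-1/2}Bu_s)^\top d\widetilde W_s - \tfrac12\int_0^T u_s^\top Mu_s\,ds$, so taking $\mathbb{E}_u$ kills the martingale term and leaves $D_{\mathrm{KL}}(\mathbb{P}_u\,\|\,\mathbb{P}_0) = \tfrac12\int_0^T u_s^\top M u_s\,ds$; the deterministic integral comes out of the expectation unchanged. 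For part (ii), with $u$ progressively measurable and $\mathbb{E}_u\!\int_0^T u_s^\top M u_s\,ds<\infty$, I would first note that this moment bound makes $\int_0^t (\Sigma^{-1/2}Bu_s)^\top d\widetilde W_s$ a genuine $\mathbb{P}_u$-martingale (not merely a local martingale) by the $L^2$ isometry, so its expectation under $\mathbb{P}_u$ at time $T$ is still zero; the same computation then yields $D_{\mathrm{KL}}(\mathbb{P}_u\,\|\,\mathbb{P}_0) = \tfrac12\,\mathbb{E}_u\!\int_0^T u_s^\top M u_s\,ds$. One should also check Novikov's condition (or a localization argument) to guarantee \eqref{eq:girsanov} defines a true probability measure; for deterministic $u\in L^2$ this is immediate since $\mathbb{E}_0\exp(\tfrac12\int_0^T u_s^\top Mu_s\,ds)<\infty$ trivially as the exponent is a constant, and in the progressively measurable case one invokes the standard finite-energy Girsanov framework.

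The main obstacle is the integrability bookkeeping in part (ii): ensuring the exponential martingale in \eqref{eq:girsanov} has expectation one (so $\mathbb{P}_u$ is a probability measure and absolutely continuous with respect to $\mathbb{P}_0$ with the claimed density), and that the stochastic integral is a true martingale under $\mathbb{P}_u$ rather than a strict local martingale whose $\mathbb{P}_u$-expectation could be negative. The clean route is to assume---or derive from the stated finite-energy hypothesis---that Novikov's or Kazamaki's criterion holds, or to proceed by localization: stop at $\tau_n = \inf\{t: \int_0^t u_s^\top Mu_s\,ds \ge n\}$, establish the identity on each $[0,\tau_n]$ where everything is bounded, and pass to the limit using the finite-energy assumption and monotone/dominated convergence. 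A secondary subtlety, worth a sentence in the proof, is the degenerate-noise case: here $\Sigma\succ0$ is assumed throughout, so $\Sigma^{-1/2}$ exists and the change of measure is nondegenerate; if $B$ does not have full column rank the quadratic form $u^\top Mu$ may be degenerate but the identity is unaffected since only $Bu_s$ enters the Girsanov exponent. Everything else---the Gaussian mean-zero property in (i), the pulling of the deterministic integral out of the expectation---is routine once the Girsanov representation is in place.
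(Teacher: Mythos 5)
Your overall route is the paper's: apply Girsanov to identify $d\mathbb{P}_u/d\mathbb{P}_0$, observe $\norm{\Sigma^{-1/2}Bu_t}^2=u_t^\top M u_t$, and kill the martingale term under $\E_u$. Your extra care in part (ii) --- the $L^2$-isometry argument showing the stochastic integral is a true martingale rather than a strict local one, the Novikov check, and the localization via $\tau_n$ --- is more thorough than the paper's sketch and is exactly the bookkeeping the sketch elides; that part is a genuine improvement.

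There is, however, a sign error in your Girsanov exponent that, taken literally, breaks the computation. With $\theta_t=\Sigma^{-1/2}Bu_t$ and $W$ the $\mathbb{P}_0$-Brownian motion, the density is
\[
\frac{d\mathbb{P}_u}{d\mathbb{P}_0}=\exp\!\Big(\int_0^T\theta_t^\top dW_t-\tfrac12\int_0^T\norm{\theta_t}^2\,dt\Big)
\]
with a \emph{minus} on the quadratic term (with your plus sign the right-hand side does not have $\mathbb{P}_0$-expectation one: for deterministic $\theta$ it integrates to $e^{\int\norm{\theta}^2dt}$), and the $\mathbb{P}_u$-Brownian motion is $\widetilde W_t=W_t-\int_0^t\theta_s\,ds$ (drift subtracted, not added), which gives
\[
\log\frac{d\mathbb{P}_u}{d\mathbb{P}_0}=\int_0^T\theta_t^\top d\widetilde W_t+\tfrac12\int_0^T\norm{\theta_t}^2\,dt .
\]
Your version, $\int\theta^\top d\widetilde W-\tfrac12\int\norm{\theta}^2dt$ with $\widetilde W$ a $\mathbb{P}_u$-Brownian motion, would yield $D_{\mathrm{KL}}(\mathbb{P}_u\,\|\,\mathbb{P}_0)=-\tfrac12\,\E_u\!\int_0^T u_t^\top M u_t\,dt\le 0$ after taking $\E_u$, which is impossible for a relative entropy; the positive sign in your conclusion is obtained only by a compensating flip in the final step (visible in the stray ``$\cdot(-1)$'' you appended before restarting). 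The fix is one line --- it is precisely the \emph{plus} sign on $\tfrac12\int\norm{\theta_t}^2dt$ in the $\widetilde W$ representation that makes the relative entropy nonnegative and equal to the energy --- and with that correction the remainder of your argument, including the treatment of (ii), goes through verbatim.
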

\begin{proof}[Sketch]
By Girsanov, with $\theta_t=\Sigma^{-1/2}Bu_t$,
\[
\frac{d\mathbb{P}_u}{d\mathbb{P}_0}
=\exp\!\left(\int_0^T \theta_t^\top dW_t - \tfrac12\int_0^T \|\theta_t\|^2\,dt\right).
\]
Taking $\E_u[\cdot]$ of the log yields $D_{\mathrm{KL}}(\mathbb{P}_u\|\mathbb{P}_0)=\E_u\big[\int_0^T \theta_t^\top dW_t - \tfrac12\int_0^T \|\theta_t\|^2\,dt\big]$. The martingale term has zero expectation, so $D_{\mathrm{KL}}=\tfrac12\int_0^T \E_u[\|\theta_t\|^2]\,dt$. For (i), deterministic $u$ gives $\E_u[\|\theta_t\|^2]=\|\theta_t\|^2=u_t^\top M u_t$. For (ii), $\E_u[\|\theta_t\|^2]=\E_u[u_t^\top M u_t]$ by definition \cite{Follmer1988,DaiPraRunggaldier93}.
\end{proof}
The optimal $u^*$ corresponds to an \emph{exponential tilting} of the path measure that produces a mean shift in the terminal statistic $w^\top X_T$ while minimizing KL; in continuous time this is the optimal Schr\"odinger bridge between the uncontrolled process and the process conditioned to realize the desired terminal success probability along a halfspace tilt \cite{LeonardSurvey,ChenGeorgiouPavon2016,Kappen2005,Todorov2009}.

\paragraph{Gaussian chance constraints.}
The equality provides a closed-form translator for halfspace chance constraints; for general sets it yields a lower bound consistent with Gaussian isoperimetry (Ledoux; Borell).

\section{Numerical validation}
Reproduction: \texttt{python qeelgs\_validate.py} reproduces validation results in seconds on a laptop. The script uses production-grade numerical methods suitable for embedded implementation:
(i) computes $V_T$, $W_c^M$ via Van Loan block exponentials (single matrix exponential per Gramian, avoiding iterative integration);
(ii) constructs matched-filter controls from \eqref{eq:matched};
(iii) estimates probabilities across regimes via scalar Monte Carlo; energies are analytical;
(iv) validates discrete-time formulas using \eqref{eq:Vd}--\eqref{eq:Wd}.

\paragraph{Statistical reporting.}
For any Monte Carlo estimate $\hat{\theta}$ from $N$ i.i.d.\ paths, define the \emph{standard error} $\mathrm{SE}[\hat{\theta}]=s_{\hat{\theta}}/\sqrt{N}$, where $s_{\hat{\theta}}$ is the sample standard deviation. Interval-event runs report both $\Delta E=\hat{E}-E_{\mathrm{halfspace}}$ and $\mathrm{SE}[\hat{E}]$; the earlier $-0.004$ deviation was within $1\sigma$.

\begin{table}[h]
\centering\small
\begin{tabular}{lccc}
\toprule
Test & Metric & Value & MC SE \\
\midrule
Baseline $p_0$ & $\mathbb{P}(w^\top X_T\ge a)$ & 0.7392618 & $<10^{-3}$ \\
Reachability SNR & $R_T^2$ & 0.7144735 & n/a \\
Halfspace tightness & $|\hat{E}-E_{\mathrm{ana}}|/E_{\mathrm{ana}}$ & $4.9\times10^{-4}$ & $1.0\times10^{-4}$ \\
Halfspace slack & $\hat{E}-E_{\mathrm{ana}}$ & $-0.0042$ & $0.0040$ \\
Discrete-time test & relative error & $2.0\times10^{-16}$ & 0 (analytic) \\
Interval event & $\Delta E$ & $-1.1\times10^{-3}$ & $1.3\times10^{-3}$ \\
Random directions (5) & $\max |\hat{E}-E_{\mathrm{ana}}|/E_{\mathrm{ana}}$ & $7.1\times10^{-4}$ & $1.4\times10^{-4}$ \\
\bottomrule
\end{tabular}
\caption{Validation summary with standard errors. Energy values are dimensionless (KL divergence).}
\end{table}

\section{Discussion, edge cases, and extensions}
\textbf{Trivial case.} If $p_1=p_0$, then $E_{\min}=0$ attained by $u\equiv0$.

\textbf{Feasibility.} If $w^\top W_c^M w=0$ then $R_T^2=0$ and any nontrivial shift is infeasible at finite energy.

\textbf{Extremes.} As $p_1\to0$ or $1$, $E_{\min}\sim \tfrac{[\Phi^{-1}(p_1)]^2}{2R_T^2}$ diverges.

\textbf{Random initial conditions.} If $X_0$ is random and independent of $W$, the result holds with $m_0$ replaced by $\E[w^\top\Psi(T,0)X_0]$ and $V_T$ unchanged.

\textbf{Time-varying systems.} The same proof applies with $\Psi$ the state transition for $(A(\cdot),B(\cdot),\Sigma(\cdot))$.

\textbf{Alternative penalties.} For $E=\tfrac12\int u^\top R u\,dt$ with $R\succ0$, replace $M$ by $R$ in \eqref{eq:Wc} and follow identically.

\begin{remark}[Computational complexity]
Computing $V_T$ and $W_c^M$ via Lyapunov-type integrals or discrete summations is dominated by $O(n^3)$ dense linear algebra (matrix exponentials or solves); forming the matched filter and $\beta$ is $O(n^2m)$.
\end{remark}

\section*{Data and code availability}
The validation script \texttt{qeelgs\_validate.py} is available at \url{https://github.com/sandroandric/qeelgs} \cite{CodeQEELGS} and reproduces all figures and tables, reports MC standard errors, and includes continuous and discrete tests, stress tests for near-singular $M$, and random directions.

\appendix
\section{Derivation details}
Let $\langle u,v\rangle_M=\int_0^T u^\top M v\,dt$ and $\mathcal{K}^\top w=(M^{-1}B^\top\Psi(T,\cdot)^\top w)_{[0,T]}$. Then
\[
\delta=w^\top\!\int_0^T \Psi(T,s)B u_s\,ds=\langle u,\mathcal{K}^\top w\rangle_M
\le \sqrt{\langle u,u\rangle_M}\,\sqrt{\langle \mathcal{K}^\top w,\mathcal{K}^\top w\rangle_M}
= \sqrt{2E(u)}\,\sqrt{w^\top W_c^M w}.
\]
Since $\mathrm{Var}(w^\top X_T)=w^\top V_T w$, $z_1-z_0=\delta/\sqrt{w^\top V_T w}$; rearrange to \eqref{eq:qcib}.

\section{Discrete-time derivation}
For $X_{k+1}=A_d X_k+B_d U_k+\xi_k$, $\delta=\sum_{k}U_k^\top M^{-1}B_d^\top (A_d^{N-1-k})^\top w$. Apply Cauchy--Schwarz in $\R^m$ with weight $M$ and use \eqref{eq:Vd}--\eqref{eq:Wd} (no extra $\Delta t$ factors; $\Sigma_d$ is per-step) to obtain Theorem~\ref{thm:disc}.

\section{Implementation notes}
Use stable integrators for $V_T$ and $W_c^M$; use Van Loan block exponentials for $A_d$, $B_d$, $\Sigma_d$; no extra $\Delta t$ in the Gramians. When $R_T^2$ is small, compute $E_{\min}$ with compensated arithmetic to avoid catastrophic cancellation for small quantile gaps.

\section{Gaussian Poincar\'e inequality and strict bounds}\label{app:poincare}
For non-halfspace events $A$, the bound in \eqref{eq:qcib} is strict. By the Gaussian Poincar\'e inequality, the sensitivity $\partial p/\partial\mu$ (probability change per unit mean shift) is maximized when $A$ is a halfspace \cite{LedouxBorell}. For any other measurable set with the same baseline probability $p_0$, the same probability increment $p_1-p_0$ requires a strictly larger mean shift $\delta$, and hence strictly larger energy. This establishes that halfspaces are the unique worst-case events for the energy-to-probability tradeoff.

\end{document}